\newtheorem{theorem}{Theorem}
\newtheorem{remark}[theorem]{Remark}
\title{The Edwards Model for fBm Loops and Starbursts}
\author{
	Wolfgang Bock\\
	{\small Technomathematics Group}\\
	{\small	University of Kaiserslautern}\\
	{\small	P.\ O.\ Box 3049, 67653 Kaiserslautern, Germany}\\
	{\small E-Mail:bock@mathemaik.uni-kl.de}\\[.3cm]
	Torben Fattler\\
	{\small Functional Analysis and Stochastic Analysis Group}\\
	{\small	University of Kaiserslautern}\\
	{\small	P.\ O.\ Box 3049, 67653 Kaiserslautern, Germany}\\
	{\small E-Mail:bock@mathemaik.uni-kl.de}\\[.3cm]
	Ludwig Streit\\
	{\small CIMA, University of Madeira, Campus da Penteada},\\
	{\small	9020-105 Funchal, Portugal}\\
	{\small BiBoS, Universit{ä}t Bielefeld, Germany}\\
	{\small E-Mail:streit@uma.pt}}
\begin{document}
\maketitle

\begin{abstract}
We extend Varadhan's construction of the Edwards polymer model to fractional Brownian loops and fractional Brownian starbursts. We show that, as in the fBm case, the Edwards density under a renormalizaion is an integrable function for the case $Hd\leq 1$.
\end{abstract}


\section{Introduction}
Fractional Brownian motion (fBm) has attracted considerable attention in recent years. This class of processes in general lacks the martingale and Markov properties so that many standard techniques from classical stochastic analysis are not available for them. For a detailed overview we refer to the monographs \cite{oks, mishura,  nourdin} and the references therein. There one now finds specific techniques and results developed in recent years such that these processes nowadays are more and more present in applications. Among them are models in finance, see e.g.~\cite{AM06,BaeuerleDesmettre18,BenderSottinenValkeila11} and physics~\cite{Ernst12,Metzler}. In particular they can also be used as a model for the conformations of chain polymers \cite{BBCES,Cher,GOSS,Hammouda}, generalizing the classical Brownian models (see e.g. \cite{RC}, and references therein). 
In this note we use results from \cite{Istas} to show the existence of an Edwards model \cite{Edwards, GOSS} for fractional Brownian loops and starbursts. These geometrical objects can serve as models for ring polymers and so called dendrimers, see e.g.~\cite{Tomalia,Winkler}. The existence of the Edwards density as an integrable function gives rise to the analytical study of these objects as stochastic processes. We follow here closely the lines of \cite{GOSS} for the fractional Brownian motion case and prove the additional nessessary properties for loops and starbursts.

\section{Fractional Rings}
Conventionally, fBm $ B^{H}(t),\quad t\geq 0\ $ is defined on half-lines as a centered Gaussian process with 
\begin{equation*}
	E\left( \left( B^{H}(t)-B^{H}(s)\right) ^{2}\right) =\left\vert
	t-s\right\vert ^{2H}
\end{equation*}

FBm loops should be defined with parameter $t$ on a circle of length $T$,
with translationally invariant increments around that circle. Following J.
Istas \cite{Istas}, this can be constructed replacing the distance $%
D=\left\vert t-s\right\vert $ by the geodesic one:%
\begin{equation*}
	\mathbb{E}\ \left( \left( b^{H}(s)-b^{H}(t)\right) ^{2}\right) =\left( \min
	\left( \left\vert s-t\right\vert ,T-\left\vert s-t\right\vert \right)
	\right) ^{2H}=:d^{2H}(s-t),
\end{equation*}%
but with limitations: the covariance kernel so constructed is positive
definite, and hence there is a \ corresponding Gaussian process if and only
if the Hurst index $H$ is small; for $H> 1/2\,$\ this will not be the
case, for the Brownian case in particular, with $H=1/2$, see \cite{BBS}.

For $H\leq 1/2$ one defines d-dimensional fBm loops via a n-tuple $\mathbf{b}%
^{H}=\left( b_{1}^{H},...b_{n}^{H}\right) $ of independent copies of $b^{H}.$%
We note that $d^{2H}$ is concave and positive on $(0,T)$ which implies
(see p.89 of \cite{Berman74}) that $\mathbf{b}_{H\text{ }}$is locally
non-deterministic and hence, for $0<t_{1}<\ldots <t_{n}$ there is a $k>0$
such that for any vector $\mathbf{u}:=(u_2,\dots u_n) \in \mathbb{R}^{n-1}$%
\begin{equation}
	\mathbb{E}\left( \left( \sum_{i=2}^{n}u_{i}\left( \mathbf{b}^{H}\left(
	t_{i}\right) -\mathbf{b}^{H}\left( t_{i-1}\right) \right) \right)
	^{2}\right) \geq k\sum_{i=2}^{n}u_{i}^{2}\mathbb{E}\left( \left( \mathbf{b}%
	^{H}\left( t_{i}\right) -\mathbf{b}^{H}\left( t_{i-1}\right) \right)
	^{2}\right) ,\   \label{loc}
\end{equation}%
by equation (2.1) of \cite{Berman91} and in equation (3.4) of \cite{H01}.

\subsection{The Self-Intersection Local Time}

We define, first informally, the self-intersection local time of  fBm loops
as the integral%
\begin{equation*}
	L=\underset{0<s<t<T}{\int \int }dsdt\ \delta \left( \mathbf{b}^{H}(s)-%
	\mathbf{b}^{H}(t)\right) ,
\end{equation*}%
an expression which calls for a regularization of the Dirac $\delta $%
-function, such as by the heat kernel%
\begin{equation*}
	\delta _{\varepsilon }(x)\equiv\frac{1}{(2\pi \varepsilon )^{d/2}}e^{-\frac{%
			|x|^{2}}{2\varepsilon }},\quad x\in \mathbb{R}^{d},\varepsilon >0,
\end{equation*}%
For $Hd<1,$ similar to the usual fBm case, 
\begin{equation*}
	L=\lim_{\varepsilon \searrow 0} L_{\varepsilon}:=\lim_{\varepsilon \searrow 0}\underset{0<s<t<T}{\int \int }dsdt\ \delta
	_{\varepsilon }\left( \mathbf{b}^{H}(s)-\mathbf{b}^{H}(t)\right) \ 
\end{equation*}%
exists, see e.g. Theorem 1 of \cite{HuNualart}. In particular one finds in
our case 
\begin{eqnarray}
	\mathbb{E}\left( L\right)  &=&\lim_{\varepsilon \searrow 0}\frac{1}{\left(
		2\pi \right) ^{d/2}}\int_{0}^{T}dt\int_{0}^{t}ds\frac{1}{\left(
		d^{H}(t-s)+\varepsilon \right) ^{d/2}}  \label{log} \\
	&=&\frac{1}{\left( 2\pi \right) ^{d/2}}\int_{0}^{T/2}d\tau \frac{T-\tau }{%
		\tau ^{dH}}+\frac{1}{\left( 2\pi \right) ^{d/2}}\int_{T/2\ }^{T}d\tau \frac{%
		T-\tau }{\left( T-\tau \right) ^{dH}}  \notag \\
	&=&\frac{T}{\left( 2\pi \right) ^{d/2}}\int_{0}^{T/2}d\tau \frac{1}{\tau
		^{dH}},  \notag
\end{eqnarray}%
which is finite for $Hd<1$. The same holds true for $\mathbb{E}(L^{2}).$ 

\begin{theorem}
	For $H\leq 1/2$ and $Hd<1$ there exists the $L^{2}$- limit%
	\begin{equation*}
		\lim_{\varepsilon \searrow 0}L_{\varepsilon }>0.
	\end{equation*}
\end{theorem}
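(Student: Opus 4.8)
The plan is to show that $L_\varepsilon$ is a Cauchy sequence in $L^2$ as $\varepsilon\searrow 0$, which gives both the existence of the limit and, together with the positivity of $\mathbb{E}(L)$ computed in \eqref{log}, the strict positivity claimed. The standard route, following the fBm treatment in \cite{GOSS} and the approach of \cite{HuNualart}, is to compute $\mathbb{E}(L_\varepsilon L_{\varepsilon'})$ explicitly and to verify that it converges to a finite limit as $\varepsilon,\varepsilon'\searrow 0$; since $\|L_\varepsilon - L_{\varepsilon'}\|_{L^2}^2 = \mathbb{E}(L_\varepsilon^2) - 2\,\mathbb{E}(L_\varepsilon L_{\varepsilon'}) + \mathbb{E}(L_{\varepsilon'}^2)$, convergence of all three mixed second moments to a common finite value forces the Cauchy property.

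First I would write $\mathbb{E}(L_\varepsilon L_{\varepsilon'})$ as a fourfold integral over two pairs of times, using the Fourier representation of the regularized delta, $\delta_\varepsilon(x) = (2\pi)^{-d}\int_{\mathbb{R}^d} dp\,\exp(ip\cdot x - \tfrac{1}{2}\varepsilon|p|^2)$, so that the Gaussian expectation becomes a determinant of the covariance matrix of the four increments $\mathbf{b}^H(s_1)-\mathbf{b}^H(t_1)$ and $\mathbf{b}^H(s_2)-\mathbf{b}^H(t_2)$. Carrying out the momentum integrals yields an integrand of the form $(2\pi)^{-d/2}\bigl(\det \Sigma + \text{(terms in }\varepsilon,\varepsilon')\bigr)^{-d/2}$, where $\Sigma$ is the $2\times 2$ covariance matrix of the pair of increments (the process being a product of $d$ i.i.d.\ scalar copies). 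The point of the local nondeterminism bound \eqref{loc} is exactly to produce a lower bound $\det\Sigma \geq k^2\,\sigma_1^2\sigma_2^2$ with $\sigma_i^2 = \mathbb{E}((b^H(s_i)-b^H(t_i))^2) = d^{2H}(s_i-t_i)$, uniformly in the time configuration, which is what allows the $\varepsilon\to 0$ integrand to be dominated.

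The remaining task is then an integrability estimate: I would bound the limiting integrand by $C\,\bigl(d^{2H}(s_1-t_1)\,d^{2H}(s_2-t_2)\bigr)^{-d/2}$ and reduce, via the change of variables used in \eqref{log} that exploits the circular symmetry and the identification $\tau = \min(|s-t|, T-|s-t|)$, to showing that $\int_0^{T/2} \tau^{-dH}\,d\tau < \infty$, which holds precisely when $dH < 1$. Dominated convergence then gives $\mathbb{E}(L_\varepsilon L_{\varepsilon'}) \to \int\!\!\int (2\pi)^{-d/2}(\det\Sigma)^{-d/2}$, a finite limit independent of the way $\varepsilon,\varepsilon'$ tend to zero, establishing the $L^2$ limit.

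The main obstacle will be the uniform local nondeterminism estimate on the \emph{loop}: the bound \eqref{loc} is stated for ordered times $0<t_1<\cdots<t_n$ on the line, but the determinant of the covariance of two increments on the circle involves the geodesic distance $d(s-t)$, and one must check that the concavity and positivity of $d^{2H}$ on $(0,T)$ genuinely yield the needed lower bound on $\det\Sigma$ uniformly as the four times approach one another or straddle the antipodal point $|s-t| = T/2$ where $d^{2H}$ has a corner. I expect the degeneracies to occur only on the diagonal $s_i = t_i$, where they are integrable by the $dH<1$ criterion, so the estimate should go through, but verifying that the corner at $T/2$ does not spoil the nondeterminism constant $k$ is the delicate point that distinguishes the loop case from the half-line fBm case.
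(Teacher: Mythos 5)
Your overall route --- compute $\mathbb{E}(L_\varepsilon L_{\varepsilon'})$ via the Fourier representation of $\delta_\varepsilon$, dominate the integrand uniformly in $\varepsilon,\varepsilon'$, and conclude by dominated convergence that all mixed second moments converge to a common finite value --- is exactly the Hu--Nualart argument that the paper invokes for this theorem (the paper itself gives little more than the computation \eqref{log} plus a citation of Theorem 1 of \cite{HuNualart}), and your positivity argument ($L_\varepsilon\geq 0$ together with $\mathbb{E}(L)>0$ from \eqref{log}) is fine. However, your pivotal estimate is false: local nondeterminism \eqref{loc} does \emph{not} give $\det\Sigma\geq k^2\sigma_1^2\sigma_2^2$ uniformly in the time configuration, and no such bound can hold. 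Take $(s_2,t_2)=(s_1,t_1)$: the two increments then coincide, so $\mu=\lambda=\rho$ and $\det\Sigma=\lambda\rho-\mu^2=0$, while $\sigma_1^2\sigma_2^2=d^{4H}(t_1-s_1)>0$. By continuity the bound fails on a whole neighbourhood of the set of coinciding intervals, which is exactly where $(\det\Sigma)^{-d/2}$ is most singular; your proposed dominating function $C\,(\sigma_1^2\sigma_2^2)^{-d/2}$ therefore does not dominate there, and the dominated-convergence step collapses. Relatedly, you misplace the difficulty: the degeneracies are not confined to $s_i=t_i$ but live on the coinciding-interval diagonal, whereas the corner of $d^{2H}$ at $|s-t|=T/2$ that worries you is harmless --- $d$ is a minimum of two affine functions, hence concave, and $x\mapsto x^{2H}$ is increasing and concave for $H\leq 1/2$, so $d^{2H}$ is concave and positive on $(0,T)$ and Berman's criterion delivers \eqref{loc} with no extra work; the paper records precisely this point.

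The repair --- and this is the actual content of the Hu--Nualart proof, mirrored in the paper's own case analysis in its $Hd=1$ theorem --- is to split the domain according to the relative position of $[s_1,t_1]$ and $[s_2,t_2]$ and to apply \eqref{loc} only after decomposing the two increments into increments over \emph{disjoint} intervals. If the intervals are disjoint, order the four times and apply \eqref{loc} to the three consecutive increments with the coefficient of the middle (``gap'') increment set to zero; this gives $\Sigma\succeq k\,\mathrm{diag}(\lambda,\rho)$ as quadratic forms, hence $\det\Sigma\geq k^{2}\lambda\rho$, and integrating $(\lambda\rho)^{-d/2}$ over this region is where the hypothesis $Hd<1$ genuinely enters, via the computation \eqref{log}. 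If the intervals overlap, say $s_1<s_2<t_1<t_2$ with $a=s_2-s_1$, $b=t_1-s_2$, $c=t_2-t_1$, write $Y_1=X_1+X_2$, $Y_2=X_2+X_3$ in terms of the consecutive disjoint increments $X_1,X_2,X_3$; then \eqref{loc} gives $\mathrm{Var}(u_1Y_1+u_2Y_2)\geq k\left(u_1^{2}d^{2H}(a)+(u_1+u_2)^{2}d^{2H}(b)+u_2^{2}d^{2H}(c)\right)$, and comparing determinants of the two quadratic forms yields $\det\Sigma\geq k^{2}\left(d^{2H}(a)d^{2H}(b)+d^{2H}(b)d^{2H}(c)+d^{2H}(c)d^{2H}(a)\right)$; the nested configuration is handled the same way. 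This lower bound correctly vanishes at $a=c=0$, and it is exactly what makes $(\det\Sigma)^{-d/2}$ integrable near the coinciding-interval set (indeed for all $Hd<3/2$). With these configuration-dependent bounds replacing your uniform one, your Cauchy/dominated-convergence scheme goes through and reproduces the proof the paper is pointing to.
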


Hence, for any $g>0$ there exists the Edwards model, with 
\begin{equation*}
	\frac{\exp \left( -gL\right) }{\mathbb{E}\left(\exp\left( -gL\right) \right)}\in L^{1}\left(
	\nu _{H}\right)
\end{equation*}%
as a the probability density w.r.t.~the fBm measure $\nu _{H}.$ 

\subsection{The Case Hd=1}
As in the fBm case for $Hd=1$ one has to center the local time. Hence we define
\begin{equation*}
	L_{\varepsilon ,c}\equiv L_{\varepsilon }-\mathbb{E}(L_{\varepsilon }).
\end{equation*}%

\begin{theorem}
	Assume that $Hd=1$, $d\geq 2$. Then the limit%
	\begin{equation*}
		L_{c}\equiv\lim_{\varepsilon \searrow 0}L_{\varepsilon ,c}\in L^{2}\left(\nu
		_{H}\right) 
	\end{equation*}%
	exists in $L^{2}\left(\nu _{H}\right)$ and there is a positive
	constant $M$ such that for all $0\leq g\leq M$ 
	\begin{equation}
		\exp (-gL_{c})
	\end{equation}%
	is an integrable function.\newline
\end{theorem}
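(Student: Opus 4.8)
The plan is to prove the two assertions separately, following the strategy used for fBm in \cite{GOSS} and using the local nondeterminism estimate \eqref{loc} as the main analytic tool. For the existence part, I would show that the family $(L_{\varepsilon,c})_{\varepsilon>0}$ is Cauchy in $L^2(\nu_H)$ as $\varepsilon\searrow 0$. Writing the regularized delta through its Fourier transform,
\begin{equation*}
	\delta_\varepsilon(x)=\frac{1}{(2\pi)^d}\int_{\mathbb{R}^d}dk\, e^{ik\cdot x}\,e^{-\varepsilon|k|^2/2},
\end{equation*}
one has
\begin{equation*}
	L_{\varepsilon,c}=\frac{1}{(2\pi)^d}\underset{0<s<t<T}{\int\int}ds\,dt\int_{\mathbb{R}^d}dk\,\Bigl(e^{ik\cdot(\mathbf{b}^H(s)-\mathbf{b}^H(t))}-\mathbb{E}\bigl[e^{ik\cdot(\mathbf{b}^H(s)-\mathbf{b}^H(t))}\bigr]\Bigr)e^{-\varepsilon|k|^2/2}.
\end{equation*}
Because the increments are Gaussian, computing $\mathbb{E}[L_{\varepsilon,c}L_{\varepsilon',c}]$ reduces to a fourfold time integral over $0<s<t<T$ and $0<s'<t'<T$ of a Gaussian $k,k'$-integral; after performing the latter, the centering replaces the integrand by the difference
\begin{equation*}
	\frac{1}{\bigl(\sigma^2\sigma'^2-\mu^2\bigr)^{d/2}}-\frac{1}{\bigl(\sigma^2\sigma'^2\bigr)^{d/2}},
\end{equation*}
where $\sigma^2,\sigma'^2$ are the variances of the two increments and $\mu$ their covariance (each regularized by $\varepsilon,\varepsilon'$). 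The subtraction of the second term cancels the contribution that diverges logarithmically at $Hd=1$, while \eqref{loc}, applied after ordering $\{s,t,s',t'\}$, guarantees that the Gram determinant $\sigma^2\sigma'^2-\mu^2$ stays bounded below in a controlled way; hence the remaining time integral converges and the Cauchy property follows.

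For the integrability part I would reduce the claim to a uniform factorial bound on the moments, namely
\begin{equation*}
	\sup_{\varepsilon>0}\mathbb{E}\bigl[|L_{\varepsilon,c}|^n\bigr]\leq n!\,C^n,\qquad n\in\mathbb{N},
\end{equation*}
for a constant $C>0$ independent of $n$. Expanding the $n$-th moment as an integral over $n$ ordered pairs of times and $n$ Fourier variables, one again exploits the Gaussian structure together with \eqref{loc} to dominate each degenerate Gaussian integral arising from coincident time points, the centering being exactly what removes the critical singularities produced by self-contractions. Passing to the limit along an a.s.\ convergent subsequence (which exists since $L_{\varepsilon,c}\to L_c$ in $L^2$) and invoking Fatou's lemma yields $\mathbb{E}[|L_c|^n]\leq n!\,C^n$. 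Consequently
\begin{equation*}
	\mathbb{E}\bigl[e^{g|L_c|}\bigr]=\sum_{n\geq 0}\frac{g^n}{n!}\,\mathbb{E}\bigl[|L_c|^n\bigr]\leq\sum_{n\geq 0}(gC)^n<\infty
\end{equation*}
for every $g<1/C$, and setting $M:=1/C$ gives $\exp(-gL_c)\leq e^{g|L_c|}\in L^1(\nu_H)$ for all $0\leq g\leq M$.

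I expect the main obstacle to be the uniform factorial moment bound. The combinatorics of the $n$-fold integral require ordering the $2n$ time variables on the loop and, for each ordering, splitting the Fourier integral so that local nondeterminism can be applied successively; one must check that after centering the surviving diagrams are individually integrable at $Hd=1$ and that their number and size combine to the clean $n!\,C^n$ growth rather than a faster one. The loop geometry adds the feature that the kernel $d^{2H}(\tau)$ is singular both as $\tau\to 0$ and, by the geodesic definition, symmetrically as $\tau\to T$; however, the concavity of $d^{2H}$ on $(0,T)$ underlying \eqref{loc} controls both ends uniformly, so the estimate should go through essentially as in the fBm case of \cite{GOSS}.
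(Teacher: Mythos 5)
Your existence argument is broadly in the spirit of what the paper does (second-moment computations via the Gaussian structure, with local nondeterminism \eqref{loc} controlling the Gram determinant $\lambda\rho-\mu^{2}$ on the dangerous configurations), but your integrability argument contains a genuine gap, and it is not the strategy of \cite{GOSS} even though you present it as such. You reduce integrability of $\exp(-gL_{c})$ to the uniform factorial bound $\sup_{\varepsilon>0}\mathbb{E}\bigl[|L_{\varepsilon,c}|^{n}\bigr]\leq n!\,C^{n}$, i.e.\ to \emph{two-sided} exponential integrability of $L_{c}$. At the critical point $Hd=1$ this bound is itself a deep theorem: for planar Brownian motion ($H=1/2$, $d=2$) it is essentially Le~Gall's result on exponential moments of the renormalized self-intersection local time, whose proof leans heavily on independence of increments and the Markov property --- both unavailable for fBm loops. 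Your sketch of how to obtain it (``ordering the $2n$ time variables\dots\ check that the surviving diagrams combine to the clean $n!\,C^{n}$ growth'') is a restatement of the difficulty, not an argument: after centering, the number of diagrams grows much faster than $n!$, and one needs the cancellations from centering to compound correctly across all $n$ pairwise interactions, each of which is individually log-divergent at $Hd=1$. Nothing in \eqref{loc} or in \cite{GOSS} delivers this.

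The paper avoids this problem entirely by running Varadhan's construction, which exploits the sign information $L_{\varepsilon}\geq 0$: only the \emph{negative} tail of $L_{c}$ matters for $\exp(-gL_{c})$, and since $L_{\varepsilon,c}\geq-\mathbb{E}(L_{\varepsilon})$, it suffices to prove the two quantitative estimates $\mathbb{E}(L_{\varepsilon})=O(|\ln\varepsilon|)$ (read off from \eqref{log}) and $\mathbb{E}\bigl((L_{\varepsilon,c}-L_{c})^{2}\bigr)\leq K\varepsilon^{1/2}$. The paper then spends its effort on the rate estimate, via the gap decomposition into $\Gamma_{\varepsilon}$ and $\Lambda_{\varepsilon}$, the splitting of $\mathcal{T}_{\Delta}$ according to whether the intervals $[s,t]$, $[s',t']$ intersect (where geodesic and Euclidean distances agree and the lemmas of \cite{GOSS} apply verbatim) or are far apart on the loop (where \eqref{loc} gives $\lambda\rho-\mu^{2}\geq k\lambda\rho$). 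Note that your existence step, which only establishes the Cauchy property with no rate, would not even feed into Varadhan's argument; the $\varepsilon^{1/2}$ rate is what makes the Borel--Cantelli/Chebyshev bookkeeping over dyadic $\varepsilon_{n}$ close up. A minor further slip: with $M:=1/C$ your geometric series diverges at $g=M$, so you would only get $0\leq g<M$; take $M=1/(2C)$ or similar. To repair your write-up with the least change, keep your existence part but upgrade it to the quantitative $O(\varepsilon^{1/2})$ bound, drop the moment program, and conclude via Varadhan's argument as in \cite{V69, GOSS}.
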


\noindent Hence, also in this case, we have an Edwards measure, with 
\begin{equation*}
\frac{\exp \left( -gL_c\right) }{\mathbb{E}\left(\exp\left( -gL_c\right) \right) }\in L^{1}\left(
d\nu _{H}\right)
\end{equation*}%
as probability density w.r.t.~the fBm measure $d\nu _{H}.$

\begin{proof}
	For the case $Hd=1$ singularities arise for $\tau =d\left( t-s\right)
	\gtrsim 0,$ so that the expectation of the local time diverges. The Varadhan construction requires two estimates \cite{V69, GOSS}, namely:
	\begin{equation*}
		\mathbb{E}(L_{\varepsilon })=O(\left\vert \ln \varepsilon \right\vert )
	\end{equation*}%
	and, after centering, i.e. 
	\begin{equation*}
		L_{\varepsilon ,c}\equiv L_{\varepsilon }-\mathbb{E}(L_{\varepsilon })
	\end{equation*}%
	we need to show for some $K>0$, that
	\begin{equation*}
		\mathbb{E}\left( \left( L_{\varepsilon ,c}-L_{c}\right) ^{2}\right) \leq
		K\varepsilon ^{1/2}.
	\end{equation*}%
    In this proof we elaborate these estimates for the case of loops. The first bound can be verified
	directly, see (\ref{log}). To adapt the proof in \cite{GOSS} of the second
	estimate it is useful to introduce 
	\begin{equation*}
		\Gamma _{\varepsilon }\underset{d\left( t-s\right) \geq \Delta }{=\int \int }%
		\delta _{\varepsilon }(\mathbf{b}^{H}(t)-\mathbf{b}^{H}(s))
	\end{equation*}%
	for a small positive $\Delta $. The "gap-renormalized" $\Gamma _{\varepsilon
	}$ is non-negative and finite in the limit%
	\begin{equation*}
		\Gamma \equiv\underset{}{\underset{\varepsilon \searrow 0}{\lim }}\Gamma
		_{\varepsilon }\in L^{2}\left(\nu _{H}\right) .
	\end{equation*}%
	As a result $ \exp\left(-g \Gamma \right)$
is finite for any $g>0$.
	As a consequence we only need to verify the validity of the Varadhan construction for $$\Lambda_{\varepsilon}\equiv\underset{d\left( t-s\right) <\Delta }{\int \int }\delta
	_{\varepsilon }(\mathbf{b}^{H}(t)-\mathbf{b}^{H}(s)).$$
%
	As a first step we center $\Lambda_{\varepsilon }$: 
	\begin{equation}
		\Lambda _{\varepsilon ,c}\equiv\Lambda _{\varepsilon }-\mathbb{E}(\Lambda
		_{\varepsilon }).  \label{3Eq1}
	\end{equation}%
	For positive $\varepsilon $ one computes 
	\begin{equation*}
		\mathbb{E}(\Lambda _{\varepsilon }^{2})=\frac{1}{\left( 2\pi \right) ^{d}}%
		\int_{\mathcal{T}_{\Delta }\ }dsdtds'dt'
		\left( \left( \lambda +\varepsilon \right) \left( \rho +\varepsilon \right)
		-\mu ^{2}\right) ^{-d2}
	\end{equation*}%
	- as in equation (13) of \cite{HuNualart} - where now%
	\begin{equation*}
		\mathcal{T}_{\Delta }\equiv\{(s,t,s^{\prime },t^{\prime })\in \left[ 0,T%
		\right] ^{4}:\ \left\vert t-s\right\vert <\Delta ,\left\vert t^{\prime
		}-s^{\prime }\right\vert <\Delta \},
	\end{equation*}%
	and for $\Delta \leq T/2$ with
	\begin{eqnarray}
		\lambda &\equiv&\mathbb{E}\left( \left( b^{H}(s)-b^{H}(t)\right) ^{2}\right)
		=\left\vert t-s\right\vert ^{2H}  \label{lam1} \\
		\rho  &\equiv&\mathbb{E}\left( \left( b^{H}(s')-b^{H}(t')\right) ^{2}\right) =\left\vert t^{\prime }-s^{\prime }\right\vert ^{2H} 
		\notag \\
		\mu  &\equiv&\mathbb{E}\left( \left( b^{H}(s)-b^{H}(t)\right) \left( b^{H}(s')-b^{H}(t')\right) \right)   \notag \\
		&=&\frac{1}{2}\left( d^{2H}\left( s-t^{\prime }\right) +d^{2H}\left(
		s^{\prime \ }-t\right) -d^{2H}\left( t-t^{\prime }\right) -d^{2H}\left( s-s'\right) \right) .  \label{lambda}
	\end{eqnarray}%
	Following the argument in \cite{GOSS} we have here again the estimate%
	\begin{equation}
		\mathbb{E}\left( \left( \Lambda _{\varepsilon ,c}-\Lambda _{c}\right)
		^{2}\right) \leq \frac{d}{2(2\pi )^{d}}\int_{\mathcal{T}_{\Delta }}d\tau
		\,\rho \int_{0}^{\varepsilon }dx\,\left( \frac{1}{(\delta +x\rho )^{d/2+1}}-%
		\frac{1}{\left( (\lambda +x)\rho \right) ^{d/2+1}}\right) ,  \label{inte}
	\end{equation}%
	so, following \cite{GOSS}, it is sufficient to show that also in the case of
	loops the rhs is of order $\varepsilon ^{1/2}.$ We then decompose $%
	\mathcal{T}_{\Delta }$ into two subsets, adapting the notation of \cite{GOSS}%
	$\ $%
	\begin{equation*}
		\mathcal{T}_{\Delta 1,2}\equiv\{(s,t,s^{\prime },t^{\prime })\in \mathcal{T}%
		_{\Delta }:\ \left[ s,t\right] \cap \left[ s^{\prime },t^{\prime }\right]
		\neq \varnothing \}
	\end{equation*}%
	and%
	\begin{equation*}
		\mathcal{T}_{\Delta 3}\equiv\{(s,t,s^{\prime },t^{\prime })\in \mathcal{T}%
		_{\Delta }:\ \left[ s,t\right] \cap \left[ s^{\prime },t^{\prime }\right]
		=\varnothing \}
	\end{equation*}%
	In the first subset, for any $\Delta \leq T/4,$ "geodesic" distances $d$ \
	between any pair of points are less than $T/2,\ $hence are equal to the
	ordinary ones: $d(t^{\prime }-s)=\left\vert t^{\prime }-s\right\vert $ etc.$.
	$ Hence the estimates\ given in \cite{GOSS} for the domains\ $\mathcal{T}%
	_{1,2}$ apply to the present case, and the contribution from this subdomain
	of the integral (\ref{inte}) is of order $\varepsilon ^{1/2}.$ For $\mathcal{T}_{\Delta 3}$ we assume without loss of generality $0<s<t<s^{\prime}<t^{\prime }.$ If $\ t^{\prime }-s\leq T/2$, all distances between points $%
	(s,t,s^{\prime },t^{\prime })$ are again less than $T/2$, and as above, the
	contribution from this subdomain of the integral \eqref{inte} for sufficiently
	small $\Delta $ is of order $\varepsilon ^{1/2}$ too. Likewise, with an
	exchange of variables, for the case $d(s^{\prime }-t)\leq T/2.$
	
	In the remaining case, for sufficiently small $\Delta ,$ the geodesic
	distance $b$ between the intervals $\left[ s,t\right] $ and $\left[
	s^{\prime },t^{\prime }\right] $ is large in comparison to $\Delta $  This
	corresponds to the second sub-region%
	\begin{equation*}
		t-s<\Delta \ll b,\text{ \ }t^{\prime }-s^{\prime }<\Delta \ll b
	\end{equation*}%
	of $\mathcal{T}_{3}$ considered in the proof of Proposition 1 in \cite{GOSS}%
	. Recalling that the periodic fBm also is locally nondeterministic, we
	conclude as in the corresponding proof of Lemma3.1(3) of \cite{H01} that for
	a sufficiently small $k>0$, 
	\begin{equation*}
		\lambda \rho -\mu ^{2}\geq k\lambda \rho 
	\end{equation*}%
	also here. Hence the arguments in Lemma 6 and Lemma 7 of \cite{GOSS} carry
	over to the case at hand, and in conclusion the $O(\varepsilon ^{1/2})$
	bound holds for $\mathbb{E}\left( \left( L_{\varepsilon ,c}-L_{c}\right)
	^{2}\right) .$
\end{proof}

\section{Starbursts}

A generalization of centered Gaussian random paths, such as e.g.%
\begin{equation*}
	\mathbf{X}=\left\{ \mathbf{x}_{k}(t_{k}):\quad \mathbf{x}_{k}(0)=0;\quad k=1,\dots,n;\quad 0\leq
	t_{k}\leq T_{k}\right\}, 
\end{equation*}%
branching out from a common starting point, is often called a "starburst" or
"dendrimer" in applications. For the definition of an fBm starburst $\mathbf{X}=%
\vec{\beta}^{H}$, following \cite{Istas} one will want to maintain the characteristic fractional
correlations between different branches, i.e. 
\begin{equation*}
	\mathbb{E}\ \left( \left( {\vec{\beta} }_{k}^{H}(s)-\vec{\beta}%
	_{l}^{H}(t)\right) ^{2}\right) =d_{kl}^{2H}(s,t),
\end{equation*}%
where now the geodesic distance 
\begin{equation*}
	d_{kl}(s,t)=\left\{ 
	\begin{array}{ccc}
		\left\vert s-t\right\vert  & \text{if } & k=l \\ 
		s+t & \text{if} & k\neq l%
	\end{array}%
	\right. .
\end{equation*}%
We denote the corresponding Gaussian measure by $\mu (H,n).$\\

As shown by Istas \cite{Istas} such an extension of fBm is again viable whenever the Hurst index $H$ is no larger than $1/2$. (For $H=1/2$ this produces simply an n-tuple of independent Brownian motions.)

For $k\neq l$ we set, first informally, 
\begin{equation*}
	L_{kl}=\int_{0}^{T_{k}}ds\int_{0}^{T_{l}}dt\delta \left( \mathbf{\beta }%
	_{k}^{H}(s)-\mathbf{\beta }_{l}^{H}(t)\right)
\end{equation*}%
and for $k=l$ we define the centered local times $L_{k},_{c}$ as in \cite%
{GOSS}.

Then consider 
\begin{equation*}
	L(g)\equiv\sum_{\ k}g_{k}L_{k,c}+\sum_{l<k}g_{kl}L_{kl}
\end{equation*}%
for positive $g_k$ and $g_{kl},\quad  1\leq k,l \leq n$. \\

\noindent For shorthand we write $g>0$ for $g_k>0$ and $g_{kl}>0$, for all $1\leq k,l \leq n$. 

$L_{k,c}$ by itself is well-defined for $Hd<1$\thinspace\ and controllable 
\`{a} la Varadhan for $Hd=1$ and small positive $g$; and the $L_{kl}$ are bounded. Hence, for $Hd<1$ we have as before\\

\begin{theorem}
	For $H\leq 1/2$ and $H<1/d$ there exists the $L^{2}$ limit%
	\begin{equation*}
		\lim_{\varepsilon \searrow 0}L_{\varepsilon }(g)>0.
	\end{equation*}
\end{theorem}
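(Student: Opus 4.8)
The plan is to establish $L^2(\mu(H,n))$-convergence of the regularized object $L_\varepsilon(g)=\sum_k g_k L_{k,\varepsilon,c}+\sum_{l<k}g_{kl}L_{kl,\varepsilon}$, where $L_{k,\varepsilon,c}=L_{k,\varepsilon}-\mathbb{E}(L_{k,\varepsilon})$, by treating it termwise. Since this is a finite linear combination with fixed positive coefficients, it suffices to show that each diagonal piece $L_{k,\varepsilon,c}$ and each off-diagonal piece $L_{kl,\varepsilon}$ converges separately; linearity of the limit and the triangle inequality then deliver $L(g)$. Positivity I would extract at the end from the non-negativity of the regularizing kernel together with a moment computation.

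First I would dispose of the diagonal terms. Because $d_{kk}(s,t)=|s-t|$, each branch $\beta_k^H$ is an ordinary $d$-dimensional fBm on $[0,T_k]$, so $L_{k,\varepsilon}$ is literally the fBm self-intersection local time. Its $L^2$-convergence for $Hd<1$, before or after centering, is the standard result already invoked above (Theorem 1 of \cite{HuNualart}, and \cite{GOSS}); hence $L_{k,\varepsilon,c}\to L_{k,c}$ in $L^2$ with nothing new to prove, and it is precisely these terms that force the hypothesis $Hd<1$.

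The substance lies in the cross terms. For $k\neq l$ the collision variance is $\mathbb{E}((\beta_k^H(s)-\beta_l^H(t))^2)=(s+t)^{2H}$, which vanishes only at the common origin $s=t=0$, so the mean $\mathbb{E}(L_{kl,\varepsilon})=(2\pi)^{-d/2}\int_0^{T_k}\!\int_0^{T_l}((s+t)^{2H}+\varepsilon)^{-d/2}\,ds\,dt$ has at worst a point singularity there; passing to the sum variable $u=s+t$ shows the limiting integral converges like $\int_0 u^{1-Hd}\,du$ near the origin, finite for $Hd<2$ and a fortiori for $Hd<1$, so no centering is needed. For the second moment I would write $\mathbb{E}(L_{kl,\varepsilon}^2)=(2\pi)^{-d}\int((\lambda+\varepsilon)(\rho+\varepsilon)-\mu^2)^{-d/2}$ with $\lambda=(s+t)^{2H}$, $\rho=(s'+t')^{2H}$ and $\mu$ the corresponding cross-covariance, and invoke a local-nondeterminism bound $\lambda\rho-\mu^2\ge k\lambda\rho$. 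This caps the kernel by $k^{-d/2}\lambda^{-d/2}\rho^{-d/2}$, whence $\mathbb{E}(L_{kl,\varepsilon}^2)$ is dominated, uniformly in $\varepsilon$, by the square of the finite first-moment integral; the usual Cauchy estimate then gives $L_{kl,\varepsilon}\to L_{kl}$ in $L^2$ with bounded limit.

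The step I expect to be the main obstacle is exactly the inequality $\lambda\rho-\mu^2\ge k\lambda\rho$ in the \emph{cross-branch} configuration $(k,s),(k,s'),(l,t),(l,t')$, where two increments reaching from branch $k$ to branch $l$ through the origin become nearly collinear along the diagonal $s=s',\,t=t'$; unlike the loop case this is not an inequality along a single ordered time-axis. I would establish it from the concavity of the tree-metric $d^{2H}$ together with the Berman/Istas machinery (\cite{Berman74,Istas,H01}) already used for loops. Granting this, combining the three ingredients yields $L_\varepsilon(g)\to L(g)$ in $L^2$. Finally, each $L_{k,\varepsilon}$ and $L_{kl,\varepsilon}$ integrates the non-negative kernel $\delta_\varepsilon$, so $\mathbb{E}(L_\varepsilon(g))=\sum_{l<k}g_{kl}\,\mathbb{E}(L_{kl,\varepsilon})\to\sum_{l<k}g_{kl}\,\mathbb{E}(L_{kl})>0$ (the centered diagonal terms contributing zero mean), showing the limit is a non-trivial positive object, which is the assertion $\lim_{\varepsilon\searrow0}L_\varepsilon(g)>0$.
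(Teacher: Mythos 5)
Your termwise architecture (diagonal pieces via the standard fBm results, cross terms separately, then linearity) is also how the paper argues; the paper disposes of this theorem with the one-line remark that $L_{k,c}$ is well defined for $Hd<1$ while ``the $L_{kl}$ are bounded''. The genuine gap is in the step you yourself flag as the main obstacle, and it is not merely hard: the inequality $\lambda\rho-\mu^{2}\geq k\lambda\rho$ is \emph{false} near the diagonal of the cross-term configuration space. At $s=s'$, $t=t'$ the variables $X=\beta_{k}^{H}(s)-\beta_{l}^{H}(t)$ and $X'=\beta_{k}^{H}(s')-\beta_{l}^{H}(t')$ coincide, so $\mu^{2}=\lambda\rho$ and the left-hand side vanishes, while $\lambda\rho=(s+t)^{4H}$ stays bounded away from zero off the corner $s=t=0$; hence no constant $k>0$ works on any neighbourhood of the diagonal, and no concavity or Berman--Istas machinery can produce one. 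This is precisely why the loop proof in the paper invokes the product bound only on the sub-region $\mathcal{T}_{\Delta 3}$, where the two time intervals are far apart, and uses different estimates near the diagonal. Consequently your claimed uniform domination of $\mathbb{E}(L_{kl,\varepsilon}^{2})$ by $k^{-d/2}$ times the square of the first-moment integral breaks down exactly in the region responsible for the singularity of the second moment: there the integrand is of the order $\rho^{-d/2}\left(|s-s'|^{2H}+|t-t'|^{2H}\right)^{-d/2}$, which is integrable for $Hd<2$ but not bounded by any constant times $(\lambda\rho)^{-d/2}$, so the Cauchy argument you build on this domination collapses.

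The cross terms can be handled correctly, and more cheaply, by a reduction your proposal misses: for a fixed pair $k\neq l$, reparametrize branch $l$ by negative time. Since $d_{kl}(s,t)=s+t=|s-(-t)|$ and within-branch distances are unchanged, the pair $(\beta_{k}^{H},\beta_{l}^{H})$ has the law of a single fBm $B^{H}$ on $[-T_{l},T_{k}]$ restricted to nonnegative and nonpositive times (a two-branch star is isometric to an interval, so for the pair one does not even need the restriction $H\leq 1/2$ from \cite{Istas}). Then $L_{kl,\varepsilon}$ is the regularized self-intersection local time of an ordinary fBm restricted to the subdomain of time pairs straddling the origin, and its $L^{2}$ convergence for $Hd<1$ is contained verbatim in Theorem 1 of \cite{HuNualart}, since the second-moment and dominated-convergence estimates there restrict to any subdomain; no new local-nondeterminism input is required, and no centering either, because the variance $(s+t)^{2H}$ vanishes only at the single corner point. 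A smaller remark: your closing step proves only that the limit has positive expectation, i.e.\ that it is a non-zero element of $L^{2}$, not that it is almost surely positive; with the centered diagonal terms $L(g)$ is a signed quantity, so this weaker reading is the most that can be asserted, and it is also all the paper's (equally loose) statement can mean.
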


\noindent So for any $g>0$ there exists the Edwards model, with 
\begin{equation*}
	\frac{\exp \left( -L(g)\right) }{\mathbb{E}\left( -L(g)\right) }\in
	L^{1}\left(\mu(H,n)\right) 
\end{equation*}%
as a probability density w.r.t. the Gaussian measure measure $d\mu (H,n).$

\begin{theorem}
	\bigskip Assume that $Hd=1$, $d\geq 2$. Then the limit%
	\begin{equation*}
		L_{c}(g)\equiv\lim_{\varepsilon \searrow 0}L_{\varepsilon ,c}(g)\in L^{2}\left(
		\mu(H,n)\right) 
	\end{equation*}%
	exists and there exists a positive constant $M$ such that for all $%
	0\leq g_{k}\leq M$ 
	\begin{equation}
		\exp (-L_{c}(g))
	\end{equation}%
	is an integrable function.\newline
\end{theorem}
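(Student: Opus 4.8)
The plan is to separate the self-intersection contributions $L_{k,c}$ from the mutual-intersection contributions $L_{kl}$ ($k\neq l$), to establish $L^2$-membership for each, and then to recombine them for the integrability statement through a decoupling inequality. First I would observe that under $\mu(H,n)$ the marginal law of each single branch $\vec\beta^{H}_k$ on $[0,T_k]$ is exactly that of a $d$-dimensional fBm with Hurst index $H$, since $d_{kk}(s,t)=|s-t|$. Consequently the centered self-term $L_{k,c}$ is in distribution precisely the renormalized fBm local time of \cite{GOSS}: the $O(|\ln\varepsilon|)$ bound on $\mathbb{E}(L_{\varepsilon,k})$ and the $O(\varepsilon^{1/2})$ bound on $\mathbb{E}((L_{\varepsilon,k,c}-L_{k,c})^2)$ hold verbatim, so $L_{k,c}\in L^2(\mu(H,n))$ and there is a constant $M_0>0$ with $\mathbb{E}[\exp(-g_kL_{k,c})]<\infty$ for $0\le g_k\le M_0$.

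Next I would show that each cross term $L_{kl}$, $k\neq l$, needs no renormalization. Here the relevant variance is $d_{kl}^{2H}(s,t)=(s+t)^{2H}$, which vanishes only at the single branch point $(s,t)=(0,0)$ rather than along a diagonal. In $\mathbb{E}(L_{kl})$ the integrand $(s+t)^{-dH}=(s+t)^{-1}$ thus has merely a point singularity in the $(s,t)$-plane, which is integrable; hence $\mathbb{E}(L_{kl})<\infty$ without centering, confirming that the $L_{kl}$ are bounded. For $\mathbb{E}(L_{kl}^2)$ I would invoke local nondeterminism of the starburst, which holds by concavity of $d_{kl}^{2H}$ in the relevant variables just as in the loop case leading to \eqref{loc}, to obtain $\lambda\rho-\mu^2\ge k\lambda\rho$ for the associated Gaussian blocks as in Lemma 3.1(3) of \cite{H01}; the same Cauchy argument as for the loops then gives $L^2$-convergence of $L_{\varepsilon,kl}$ to a limit $L_{kl}\ge 0$. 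Since $L_c(g)$ is a finite linear combination of the $L_{k,c}$ and the $L_{kl}$, it then lies in $L^2(\mu(H,n))$.

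For the integrability of $\exp(-L_c(g))$ I would exploit the sign structure. Because every $L_{kl}$ is non-negative, $\exp(-\sum_{l<k}g_{kl}L_{kl})\le 1$, so
\[
\mathbb{E}\bigl[\exp(-L_c(g))\bigr]\le\mathbb{E}\Bigl[\prod_k\exp(-g_kL_{k,c})\Bigr].
\]
The branches are correlated, so this expectation does not factor; instead I would apply the generalized Hölder inequality with exponents $n$ (the number of branches),
\[
\mathbb{E}\Bigl[\prod_k\exp(-g_kL_{k,c})\Bigr]\le\prod_k\Bigl(\mathbb{E}\bigl[\exp(-ng_kL_{k,c})\bigr]\Bigr)^{1/n}.
\]
Each factor depends only on the marginal law of branch $k$, hence is finite provided $ng_k\le M_0$. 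Setting $M:=M_0/n$ yields integrability for all $0\le g_k\le M$, with the cross-couplings $g_{kl}$ unrestricted.

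The main obstacle is precisely this inter-branch correlation: one cannot reduce to the single-fBm result by independence, since the starburst has $\mu\neq 0$ across branches. The resolution rests on two facts — that the only source of potential unboundedness of $\exp(-L_c(g))$ is the negative excursions of the centered self-terms, the mutual terms being non-negative and thus only helping, and that Hölder's inequality decouples these self-terms while invoking solely their marginal distributions, which are genuine fBm and therefore covered by \cite{GOSS}. Verifying local nondeterminism near the branch point and the attendant $L^2$ bound for the cross terms is the one place where the starburst geometry encoded by the $s+t$ distance must be treated directly rather than quoted, though the milder (pointwise) singularity makes this step easier than in the self-intersection case.
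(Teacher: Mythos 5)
Your proposal is correct and follows essentially the same route as the paper: each self-term $L_{k,c}$ is handled via the marginal fBm law and the result of \cite{GOSS}, the correlated self-terms are decoupled by requiring $\exp(-g_kL_{k,c})\in L^{n}$ (i.e.\ the generalized H\"older inequality, giving $M=M_0/n$), and the non-negativity of the cross terms $L_{kl}$ makes $\exp(-\sum_{l<k}g_{kl}L_{kl})\leq 1$ so the couplings $g_{kl}$ are unrestricted. Your additional verification of the $L^2$ convergence of the $L_{kl}$ near the branch point is a detail the paper merely asserts, not a different method.
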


\begin{proof}
	For sufficiently small ${g}>0$ we have that $\exp\left(
	-g_{k} L_{k,c}\right) \epsilon L^{1}(\mu(H,n)).$ Hence we can choose $g>0$ such that $
	\exp \left( - g_{k} L_{k,c}\right) \epsilon L^{n}$ and have%
	\begin{equation*}
		\exp \left( -\sum_{k=1}^{n} g_{k}L_{k,c}\right) \in L^{1}(\mu(H,n)).
	\end{equation*}%
	The $L_{kl}$ are non-negative and bounded, hence, for arbitrary $g_{kl}\geq 0
	$%
	\begin{equation*}
		\exp (-L_{c}(g))=\exp \left(
		-\sum_{k=1}^{n}g_{k} L_{k,c}-\sum_{k>1}^{n}g_{kl}L_{kl}\right) \in
		L^{1}(\mu (H,n)).
	\end{equation*}
\end{proof}

\begin{remark}
	Based on the construction of fBm on metric trees as parameter space by 
	\cite{Istas}, vast generalizations of this last result seem possible.
\end{remark}

\section{Conclusion and Outlook}
In this note we have generalized the methods from \cite{GOSS} from fBm to fractional loops and starbursts. These loops and starbursts can serve as models for the coformations of ring polymers and dendrimers in solvents. The existence of the Edwards density is the starting point for further analytical study of these models.

\end{document}